\PassOptionsToPackage{sort&compress}{natbib}
\documentclass[preprint,11pt,authoryear]{elsarticle}

\usepackage[margin=1in]{geometry}
\usepackage{amsmath,amsfonts,amssymb,amsthm}
\usepackage{mathtools}
\usepackage[dvipsnames]{xcolor}
\usepackage{derivative}
\usepackage{mathrsfs}
\usepackage{booktabs}
\usepackage{float}
\usepackage{subcaption}
\usepackage{hyperref}

\usepackage[twoside]{fancyhdr}
\renewcommand{\vec}[1]{{\mathbf{#1}}}

\newcommand{\R}{\mathbb{R}}

\newcommand{\Ascr}[0]{\mathscr{A}}
\newcommand{\Bscr}[0]{\mathscr{B}}
\newcommand{\Cscr}[0]{\mathscr{C}}

\renewcommand{\Pr}{\mathbb{P}}
\newcommand{\E}[0]{\mathbb{E}}
\newcommand{\normal}[0]{\mathcal{N}}

\newcommand{\Zero}[0]{\mathbf{0}}

\newcommand{\mbf}[1]{{\mathbf{#1}}}
\newcommand{\bs}[1]{{\boldsymbol{#1}}}

\newcommand{\tr}[0]{\operatorname{tr}}
\newcommand{\diag}[0]{\operatorname{diag}}

\newtheorem{theorem}{Theorem}[section]

\theoremstyle{definition}
\newtheorem{definition}{Definition}
\newtheorem{setting}[theorem]{Setting}
\newtheorem{assumption}[theorem]{Assumption}

\theoremstyle{remark}

\newcommand{\Var}[0]{\operatorname{Var}}

\newcommand{\CFVaR}[0]{{\operatorname{CFVaR}}}

\newcommand{\Lc}{\mathcal{L}}

\newcommand{\Acal}{\mathcal{A}}
\newcommand{\Bcal}{\mathcal{B}}
\newcommand{\Ccal}{\mathcal{C}}

\newcommand{\Fcal}{\mathcal{F}}

\hypersetup{
    colorlinks={True},
    linkcolor={blue},
    citecolor={SpringGreen},
}

\renewcommand{\Pr}{\mathbb{P}}

\DeclareMathOperator{\Sharpe}{Sharpe}
\newcommand{\AdjSharpe}[0]{{\operatorname{R-VaR}_2^\alpha}}

\begin{document}
\begin{frontmatter}

\title{Sharpe Ratio and Return-VaR Ratio Maximization for Option Portfolios with Skew-Elliptical $t$ Underlying Returns}

\author[mcmaster]{Kyle Sung} \corref{cor1}
\author[mcmaster]{Traian A. Pirvu}
\address[mcmaster]{Department of Mathematics and Statistics, McMaster University}
\date{\today}

\cortext[cor1]{Correspondence to: Kyle Sung \textless sungk5@mcmaster.ca\textgreater}

\begin{abstract}
    We provide a formulation for optimal option portfolios under Sharpe Ratio maximization when the underlying returns follow a skew-elliptical $t$-distribution. This departs from the traditional normal returns setting in the context of Sharpe ratio maximization by allowing the modelling of heavy-tailed and skewed dynamics. The novelty of this paper and our main result is to provide explicit formulas for the portfolio weights when maximizing the Sharpe ratio and return-to-Value-at-Risk (VaR) ratio in the skew-elliptical setting. Numerical experiments reveal that the optimal portfolios for the two ratios are different.
\end{abstract}
\end{frontmatter}

\paragraph{Keywords} Sharpe Ratio, VaR, Options, Optimal Portfolios, Skew-Elliptical $t$ Returns

\section{Introduction}

The Sharpe ratio \citep{sharpe1998sharpe}, defined as the ratio between the excess return a portfolio experiences above the market and the volatility of the portfolio, is one of the most commonly used portfolio performance metrics \citep{maller2010}, operating under the principle of finding a trade off between maximizing reward and minimizing risk. 
Indeed, it has become one of the most widespread measures of portfolio performance \citep{10.1093/rfs/hhm075};
researchers have explored findings  maximal Sharpe ratio allocations for stock portfolios \citep{risks5020027}. 

The portfolio optimization problem becomes more difficult when the portfolio consists of options instead of simply the underlyings. Options are derivatives that give the purchaser of the option the right, but not the obligation, to buy or sell an underlying asset at a pre-determined price and date. The nonlinearity of the payoff of an option means that the distribution of the portfolio return is often intractable. Thus, the delta-gamma approximation \citep{Glasserman2003} is often used. \citet{CUI20132124,Chen2013} develop option portfolio optimization for value-at-risk-based optimization using the Cornish-Fisher quantile approximation in the case of normal returns. 

The financial markets have been found to display behaviours that are not well described by a normal distribution \citep{Glasserman2003}, exhibiting both heavy tails and asymmetry \citep{Cotter01102006,Echaust20142234}. As such, much research has considered incorporating heavier tails \citep{basnarkov2019optionpricingheavytaileddistributions,Cassidy_2010,Cassidy01082013} and skewness \citep{AzzaliniCapitanio2003Skewt,YinBalakrishnan2024SkewElliptical,AzzaliniCapitanio1999SkewNormal,Wang2024SkewT,KIM2003417,genton2005generalized,GENTON2001319,BRANCO200199,Lin2007RobustSkewT}.
\citet{Hu01012010} consider portfolio optimizations for stock returns under Student $t$ and skew $t$-distributions, but in the equity space instead of the option space. 

Most recently, work by \citet{sung2026optimaloptionportfoliosskewelliptical} has bridged these two literatures, solving optimal optimal portfolios for variance and Value at Risk (VaR) minimization when the underlying returns follow a skew-elliptical $t$-distribution; they obtain closed-form tractable weights in both cases.

Though the literature has considered risk minimization for option portfolios in the skew $t$ setting \citep{sung2026optimaloptionportfoliosskewelliptical}, to the best of our knowledge, a treatment of maximization of a risk-reward ratio is not yet known. 

The Sharpe ratio measures the additional return per unit of volatility and has become a standard tool for evaluating portfolio performance. However, its reliance on the standard deviation as the sole risk metric may misrepresent risk when returns deviate from normality, prompting adaptations.

This study extends previous portfolio optimization research by considering the Sharpe ratio and an adjusted Sharpe ratio, both in the context of option portfolios where the underlying asset returns follow a skew $t$-distribution. 

\subsection{Related Work} 

\paragraph{Risk-Reward Ratios} Many have tried to improve on the Sharpe ratio. The Sortino ratio \citep{sortino1991} replaces the variance with the downside deviation of the portfolio; the STARR ratio \citep{MartinR.Douglas2003Popa} replaces the variance with the expected shortfall. \citet{ZAKAMOULINE20091242} propose a family of generalized Sharpe ratios which consider other measures than the mean and variance. \citet{Cogneau2009_101Ways} propose 101 ways to measure portfolio performance. The return-to-VaR ratio has also be proposed \citep{AlexanderGordonJ.2003PPEU,GregoriouGregN2004PoCh} for downside loss control. 
Omega ratio was considered by \citet{risks5020027} in portfolio optimization. Risk-reward option portfolios have also been considered in the normal setting \citep{Jewell2013}. 

\paragraph{Heavy-Tailed Portfolio Optimization} In the objective of risk minimization, various approaches to portfolio optimization have been considered. \citet{Hu01012010} numerically consider the optimal risk minimization of portfolios when the underlying returns are Student $t$ and skewed $t$-distributed. \citet{sung2026optimaloptionportfoliosskewelliptical} analytically consider risk minimization of option portfolios in the skew-elliptical $t$-distribution setting. Research has also considered maximum Sharpe ratios under other distributions.

To the best of our knowledge, the general question concerning Sharpe ratio maximization when the underlying returns follow a skew-elliptical $t$-distribution, is unknown.

\paragraph{Contributions and Findings} Our contribution is to provide closed-form expressions for the portfolios which maximize the Sharpe ratio and R-VaR ratio. The departure from the traditional normal returns setting to the skew-elliptical $t$ setting allows investors to incorporate both heavy tails and skewness into their optimization. This makes the model more realistic and applicable to real-world capital markets.

Our findings are twofold: first, we demonstrate that our analytic portfolio optimization results yield optimal Sharpe and R-VaR portfolios with different compositions. However, these objectives tend to yield highly leveraged positions. We address this by imposing portfolio box constraints and performing numerical portfolio optimization.

\subsection{Outline}

The remainder of the paper is organized as follows: section~\ref{sec:model} provides an overview of the setting, assumptions, and background needed for heavy-tailed skewed portfolio optimization.
Section~\ref{sec:portfolio_optimization} presents our main results: optimal options portfolios under Sharpe ratio and R-VaR ratio maximization assuming returns follow a skew $t$-distribution. Section~\ref{sec:params} introduces the main dataset for numerical experiments, which give examples of optimal portfolios in 
Section~\ref{sec:experiments}. Section~\ref{sec:conclusion} concludes. All major proof details are relegated to~\ref{sec:proofs}.  

\section{The Model} \label{sec:model}

\subsection{Notation}

All random variables are henceforth defined on the probability space $(\Omega, \Fcal, \Pr)$, and expectations are taken with respect to the probability measure $\Pr$. 
Let $\odot$ denote the Hadamard product. 
For a matrix $X \in \R^{M\times N}$, we let $X_{[\bullet, i]}$ denote its $i^{\text{th}}$ column.

\subsection{Setting}

\noindent We will work in the setting of \citet{sung2026optimaloptionportfoliosskewelliptical}. For convenience, we restate the main components below.

\begin{setting}[Optimal Portfolios]
    \label{setting:optimal_portfolios}
    Consider a portfolio consisting of $M$ options written on $N$ underlying stocks. Let $\vec{x} \coloneqq (x_1,\ldots,x_M)$ denote the vector of number of shares held of each option. Denote by 
    \begin{equation}
        \label{eqn:option_prices_vector}
        \vec{v}(\vec{S},t) \coloneqq \big(V_1(\vec{S},t),\ldots,V_M(\vec{S},t)\big)
    \end{equation} 
    the option values at time $t$ with underlying prices $\vec{S} \coloneqq (S_1,\ldots,S_N)$. The portfolio value is given by 
    \[
        V(\vec{x};\vec{S},t) 
        = 
        \vec{x}^\top \vec{v}(\vec{S},t) 
        .
    \]
    Note that we can easily convert from shares to weights as follows:
    \[
    \vec{w} = \dfrac{\vec{v} \odot \vec{x}}{\vec{v}^\top \vec{x}}
    .
    \]
    Let $\Delta V (\vec{x})$ denote the portfolio P\&L over the time interval $[t,t + \Delta t]$, which can be written explicitly as \begin{equation}
        \Delta V(\vec{x}) = V(\vec{x}; \vec{S} + \Delta \vec{S}, t + \Delta t) - V(\vec{x}; \vec{S},t)
        .
    \end{equation}
\end{setting}

\noindent Let $r_f^{\text{annual}}$ denote the annual risk-free interest rate and $r_f$ denote the per-period risk-free interest rate over $\Delta t$.

We now formalize our main assumption---that the underlying returns are skew-elliptical $t$-distributed---from which we will derive our main results of optimal option portfolios. We do so using the characterization of the skew-elliptical $t$-distribution by \citet{AzzaliniCapitanio2003Skewt}.

\begin{assumption}[Skew $t$-Distributed Returns]
    \label{assumption:skew_t_returns}
    Fix a location vector $\bs{\mu} \in \R^N$, a scale matrix $\bs{\Sigma} \in\R^{N\times N}$, a degrees of freedom $\nu \in \R$, and skewness vector $\bs{\omega} \in \R^N$. We assume that $\Delta \vec{S} $ has the multivariable skew-elliptical $t$-distribution with parameters $\big(\bs{\mu}, \bs{\Sigma}, \nu, \bs{\omega}\big)$. That is, 
    \[
        \Delta \vec{S}  \sim t_N^{\operatorname{skew}} 
        \big (
            \bs{\mu}, \bs{\Sigma}, \nu, \bs{\omega}
        \big )
        .
    \]
\end{assumption}

\subsection{Pricing the Option}

\noindent We perform option pricing under the distributional assumptions in Assumption~\ref{assumption:skew_t_returns} in the same manner as \citet{sung2026optimaloptionportfoliosskewelliptical}.

\subsection{The Delta-Gamma Approximation}

\noindent The delta-gamma approximation is the second-order Taylor expansion of the portfolio P\&L:
\[
    \Delta V (\vec{x})
    =
    (\Delta t) \theta + \bs{\delta}^\top (\Delta \vec{S}) + \dfrac{1}{2} (\Delta \vec{S})^\top \Gamma (\Delta \vec{S}).
\]
\citet{sung2026optimaloptionportfoliosskewelliptical} show that under Assumption~\ref{assumption:skew_t_returns}, the expectation and variance of the portfolio change are:
\begin{equation}
    \label{eqn:expectation_variance}
    \begin{split}
    \E[\Delta V (\vec{x}) ] &= 
         \vec{u}^\top \vec{x}
    \\
    \Var[\Delta V (\vec{x}) ] &= 
        \frac{1}{2}\vec{x}^\top Q \vec{x}
    .
    \end{split}
\end{equation}
For completeness, we include explicit forms in \ref{sec:moments}.

\subsection{Sharpe Ratio and R-VaR Ratios}

The Sharpe ratio is a risk-reward ratio which measures the excess market return achieved for each additional unit of volatility. The Sharpe ratio is given by 
\[
    \Sharpe [ \Delta V (\vec{x}) ]
    =
    \dfrac
    {
        \E[\Delta V(\vec{x})] - r_f
    }
    {
        \sqrt{
            \Var [\Delta V(\vec{x})]
        }
    }
    .
\]
In particular, the Sharpe ratio of the portfolio can be written as
\begin{equation}
    \Sharpe [ \Delta V (\vec{x}) ]
    =
    \dfrac{\vec{u}^\top \vec{x} - r_f}{\sqrt{\frac{1}{2} \vec{x}^\top Q\vec{x}}}
    .
\end{equation}
Recall the Cornish-Fisher expansion of the VaR \citep{CUI20132124,sung2026optimaloptionportfoliosskewelliptical}, defined for a tail risk threshold $\alpha >0$ as:
\[
    \CFVaR_2^\alpha [\Delta V(\vec{x})]
    =
    - \vec{u}^\top \vec{x} - \normal^{-1} (\alpha) \sqrt{\frac{1}{2}\vec{x}^\top Q\vec{x}},
\]
where $\normal^{-1}(\,\cdot\,)$ denotes the inverse c.d.f.\ of the standard normal distribution.

\begin{definition}[Return-to-VaR Ratio]
    The \textit{return-to-value-at-risk ratio}, denoted $\AdjSharpe$, is given by:
    \begin{equation}
        \AdjSharpe [\Delta V(\vec{x})] = \dfrac{\E[\Delta V(\vec{x})] - r_f}{\CFVaR_2^\alpha [\Delta V (\vec{x})]}.
    \end{equation}
    This can be written specifically for the portfolio as
    \begin{equation}
        \AdjSharpe [\Delta V(\vec{x})] = \dfrac{\vec{u}^\top \vec{x} - r_f}{-\vec{u}^\top \vec{x} - \normal^{-1}(\alpha) \sqrt{\frac{1}{2}\vec{x}^\top Q\vec{x}}}
        .
    \end{equation}
\end{definition}

\section{Portfolio Optimization} \label{sec:portfolio_optimization}

\subsection{Maximizing Sharpe Ratio}

\noindent The maximum Sharpe ratio problem is to find a portfolio allocation $\vec{x}$ which solves the fractional program
\begin{equation}
    \label{eqn:sharpe_ratio_maximization}
    \tag{P1}
    \begin{cases}
        \max \limits_{\vec{x} \in \R^M} \big\{ \Sharpe [\Delta V (\vec{x})] \big\}
        \\[10pt]
        \vec x^\top \vec v = 1.
    \end{cases}
\end{equation}

The following theorem is the first main result of our paper, and prescribes the optimal number of shares for Sharpe ratio maximization in the skew $t$-distributed returns setting.

\begin{theorem}[Optimal Sharpe Portfolios] 
    \label{thm:optimal_sharpe_portfolio}
    The Sharpe ratio maximization problem~\eqref{eqn:sharpe_ratio_maximization} is solved with the optimal portfolio allocation:
    \begin{equation}
        \label{eqn:optimal_sharpe_ratio_portfolio}
        \vec{x}_{\Sharpe}^\star = \dfrac{Q^{-1}(\vec{u} - r_f \vec{v})}{\vec{v}^\top Q^{-1} (\vec{u} - r_f \vec{v})}
        ,
    \end{equation}
    where $\vec{v}$ was defined in Equation~\eqref{eqn:option_prices_vector} and where $\vec{u}$ and $Q$ were defined below Equation~\eqref{eqn:expectation_variance}.
\end{theorem}

\subsection{Maximizing R-VaR Ratio}

The maximum R-VaR Ratio problem is to find a portfolio allocation $\vec{x}$ which solves the fractional program
\begin{equation}
    \label{eqn:var_adj_sharpe_ratio_maximization}
    \tag{P2}
    \begin{cases}
        \max \limits_{\vec{x} \in \R^M} \big\{ \AdjSharpe [\Delta V (\vec{x})] \big\}
        \\[10pt]
        \vec x^\top \vec v = 1.
    \end{cases}
\end{equation}

The second main result of our paper prescribes the optimal number of shares for R-VaR ratio maximization in the skew $t$-distributed returns settings.

\begin{theorem}[Optimal R-VaR Portfolios] 
    \label{thm:optimal_var_adj_sharpe_portfolio}
    The $\AdjSharpe$ ratio maximization problem~\eqref{eqn:var_adj_sharpe_ratio_maximization} is solved with the optimal portfolio allocation:
    \begin{equation}
        \label{eqn:optimal_rvar_ratio_portfolio}
        \vec{x}_{\AdjSharpe}^\star
        =
        \dfrac{Q^{-1} [ (1 - \lambda^\star) \vec{u} - r_f \vec{v}]}{\vec{v}^\top Q^{-1} [(1 - \lambda^\star) \vec{u} - r_f \vec{v}]}
        ,
    \end{equation}
    where $\lambda^\star$ is a constant defined in \ref{proof:thm:optimal_var_adj_sharpe_portfolio}.
\end{theorem}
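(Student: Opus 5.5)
We reduce (P2) to the same homogeneous structure used for Theorem~\ref{thm:optimal_sharpe_portfolio}. On the budget set $\vec x^\top\vec v=1$ we have $r_f = r_f\,\vec v^\top\vec x$, so the numerator becomes the linear form $(\vec u - r_f\vec v)^\top\vec x$. The denominator becomes
\[
-\vec u^\top\vec x + c\sqrt{\tfrac12\vec x^\top Q\vec x},
\qquad c\coloneqq-\normal^{-1}(\alpha)>0
\]
for $\alpha<1/2$. Both are positively homogeneous of degree one in $\vec x$, so the ratio is scale invariant. We may therefore drop the budget constraint, maximize over all directions $\vec x$ with positive CF-VaR denominator, and renormalize by $\vec v^\top\vec x$ at the end. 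We assume throughout, as in the Sharpe case, that $Q$ is symmetric positive definite and that the relevant normalizing denominators are nonzero.

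Write $\vec a\coloneqq\vec u-r_f\vec v$ and $\sigma(\vec x)\coloneqq\sqrt{\tfrac12\vec x^\top Q\vec x}$. Set $f(\vec x)=\vec a^\top\vec x$ and $g(\vec x)=-\vec u^\top\vec x+c\,\sigma(\vec x)$, and let $R = f/g$ denote the ratio. At an interior maximizer with $g>0$, the first-order condition $\nabla f = R\,\nabla g$ reads
\[
\vec a = R\Big(-\vec u + \frac{c}{2\sigma}Q\vec x\Big).
\]
Solving gives
\[
Q\vec x = \frac{2\sigma}{cR}\big(\vec a + R\vec u\big)=\frac{2\sigma}{cR}\big((1+R)\vec u - r_f\vec v\big).
\]
Hence $\vec x\propto Q^{-1}[(1-\lambda^\star)\vec u - r_f\vec v]$ with $\lambda^\star\coloneqq -R^\star$, where $R^\star$ is the optimal ratio value. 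Imposing $\vec v^\top\vec x=1$ yields exactly~\eqref{eqn:optimal_rvar_ratio_portfolio}. The scalar prefactor $2\sigma/(cR)$ is irrelevant after normalization, provided its sign is consistent with $g>0$; this gives an admissibility condition on the sign of $\lambda^\star$ and of the normalizing denominator.

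The main obstacle is that $\lambda^\star$ is defined implicitly, as the optimal value itself. I would pin it down with a Dinkelbach-type argument. Let $\vec y(\lambda)\coloneqq Q^{-1}[(1-\lambda)\vec u - r_f\vec v]$. Substitute the direction $\vec y(\lambda)$ into $R$ and require the self-consistency equation $R(\vec y(\lambda))=-\lambda$. Using the identity above, this reduces to a scalar equation in $\lambda$ involving the quadratic forms $A=\vec u^\top Q^{-1}\vec u$, $B=\vec u^\top Q^{-1}\vec v$, $C=\vec v^\top Q^{-1}\vec v$. Explicitly, the first-order condition contracted with $\vec x$ gives $f=R(-\vec u^\top\vec x+c\sigma)$, consistent with Euler's theorem, and contracting instead with $Q^{-1}\vec u$ or $Q^{-1}\vec v$ yields a quadratic in $\lambda$. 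I would solve this quadratic explicitly and select the root giving the larger ratio with $g>0$. That root is $\lambda^\star$, as recorded in the appendix.

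Finally, global optimality follows because $g$ is convex, being a linear term plus a norm. Hence for fixed $t$ the problem $\max_{\vec x}\{f - t g\}$ is a concave maximization on the homogeneous cone, and the Dinkelbach characterization shows that the stationary point with the largest $R$ is the global maximizer.
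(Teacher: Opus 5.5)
Your proof is correct, modulo the existence and admissibility assumptions the paper also leaves tacit, but it takes a genuinely different route from the paper.

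\textbf{The paper's route.} The paper fixes the expected P\&L $\varepsilon=\vec u^\top\vec x$. It argues that on each slice $\{\vec v^\top\vec x=1,\ \vec u^\top\vec x=\varepsilon\}$ the ratio is maximized by minimizing $\tfrac12\vec x^\top Q\vec x$. It then imports the frontier portfolio $\vec x(\varepsilon)=G\bs\psi(\varepsilon)$, with variance $\Ascr\varepsilon^2+\Bscr\varepsilon+\Cscr$, from the earlier variance-minimization theorem. Finally it maximizes a one-variable function whose critical points $\varepsilon_\pm$ solve a quadratic with coefficients $\Acal_{\AdjSharpe},\Bcal_{\AdjSharpe},\Ccal_{\AdjSharpe}$. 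The conversion of $G\bs\psi(\varepsilon_\pm)$ into the stated form is only asserted, and $\lambda^\star$ is never written down.

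\textbf{Your route.} You use degree-one homogeneity so that the budget multiplier drops out, and read the form directly off $\nabla f=R\nabla g$. This identifies $\lambda^\star=-R^\star$. That identification is in fact forced when $r_f\neq0$ and $\vec u,\vec v$ are not collinear, because distinct $\lambda$ give distinct normalized portfolios.

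\textbf{Sharper versions of your steps.}
\begin{itemize}
\item Self-consistency is cleaner than you suggest. Since $Q\vec y(\lambda)=\vec a-\lambda\vec u$, the condition $f(\vec y(\lambda))=-\lambda\, g(\vec y(\lambda))$ is equivalent to $2\sigma(\vec y(\lambda))=-c\lambda$, with your $c=-\normal^{-1}(\alpha)$. That is, $\lambda<0$ and $c^2\lambda^2=2[(1-\lambda)^2\rho_1-2(1-\lambda)r_f\rho_2+r_f^2\rho_3]$, where $\rho_1=\vec u^\top Q^{-1}\vec u$, $\rho_2=\vec u^\top Q^{-1}\vec v$, $\rho_3=\vec v^\top Q^{-1}\vec v$. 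When $c^2>2\rho_1$, which is exactly the condition for $g>0$ off the origin, and $\vec a\neq\Zero$, this quadratic has exactly one negative root.
\item Your Dinkelbach step gives more than you claim. If $\vec x_0$ is stationary with $g(\vec x_0)>0$ and $t=R(\vec x_0)>0$, then $f-tg$ is concave. So $\vec x_0$ globally maximizes $f-tg$, with value $0$, and hence maximizes $R$ over all of $\{g>0\}$. Any such stationary point is optimal, not just ``the one with the largest $R$''.
\end{itemize}

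\textbf{The condition you cannot drop.} You need $\vec v^\top\vec y(\lambda^\star)>0$. Your ``renormalize at the end'' preserves the ratio only on the half-space $\vec v^\top\vec x>0$. If this condition fails, \eqref{eqn:var_adj_sharpe_ratio_maximization} has no maximizer at all. The theorem tacitly needs this condition too.

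\textbf{What each route buys.} Yours gives a direct derivation of the stated form, a meaning for $\lambda^\star$, and a global-optimality certificate. The paper's slice reduction, by contrast, silently restricts to $\varepsilon>r_f$ and does not say which of $\varepsilon_\pm$ is the maximizer. The paper's route does exhibit the optimizer as a point on the minimum-variance frontier. It also gives the optimal expected P\&L $\varepsilon^\star$ explicitly in terms of the frontier coefficients $\Ascr,\Bscr,\Cscr$.
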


\section{Distributional Parameters} \label{sec:params}

\noindent We consider option portfolios written on five underlying stocks: DIS, XOM, PFE, MO, and INTC. We consider a set of parameters produced by \citet{sung2026optimaloptionportfoliosskewelliptical} given in Table~\ref{tab:skew_t_fit}.
\begin{table}[h!]
    
    \centering
    \begin{tabular}{ll}
    \hline
    \textbf{Quantity} & \textbf{Value} \\
    \hline
    $\mathrm{dp.mu}$ & $-0.13113, -0.095028, -0.073402, 0.32902, -0.14546$ \\
    $\mathrm{dp.omega}$ & $0.069222, 0.14039, 0.14851, -0.72981, 0.17167$ \\
    $\mathrm{dp.nu}$ & $6.041$ \\
    $\mathrm{dp.\Sigma}$ & $\begin{pmatrix}0.73931 & 0.3017 & 0.2904 & 0.098984 & 0.34011\\0.3017 & 0.77136 & 0.31795 & 0.15848 & 0.25785\\0.2904 & 0.31795 & 0.71838 & 0.12605 & 0.2633\\0.098984 & 0.15848 & 0.12605 & 0.62268 & 0.076389\\0.34011 & 0.25785 & 0.2633 & 0.076389 & 0.66185\end{pmatrix}$ \\
    $\mathrm{cp.mean}$ & $-0.029077, -0.013345, 0.028808, -0.041695, -0.010414$ \\
    $\mathrm{cp.var.cov}$ & $\begin{pmatrix}1.0948 & 0.44269 & 0.42369 & 0.18581 & 0.49466\\0.44269 & 1.1465 & 0.46697 & 0.26721 & 0.37444\\0.42369 & 0.46697 & 1.0635 & 0.22632 & 0.37981\\0.18581 & 0.26721 & 0.22632 & 0.79343 & 0.16426\\0.49466 & 0.37444 & 0.37981 & 0.16426 & 0.97118\end{pmatrix}$ \\
    $\mathrm{cp.gamma1}$ & $0.096803, 0.075538, 0.098388, -0.45578, 0.13681$ \\
    $\mathrm{cp.gamma2M}$ & $35.594$ \\
    $\log L$ & $-4808.6$ \\
    \hline
    \end{tabular}
    \caption{Skew-$t$ model fit.}
    \label{tab:skew_t_fit}
\end{table}
The location vector and scale matrix for the skew $t$-distribution in this setting are:
\[
    \begin{split}
    \bs{\mu} &= \vec{S} \odot (\Delta t) \mathbf{r}
    \\
    \bs{\Sigma} &= (\Delta t)\diag (\vec{S}) \diag (\bs{\Sigma}) \vec{C} \diag (\bs{\Sigma}) \diag (\vec{S}),
    \end{split}
\]
where $\mathbf{r}$ denotes the expected annual log returns vector and $\vec{C}$ denotes the matrix of correlations.

\section{Numerical Experiments} \label{sec:experiments}

\noindent We now obtain optimal Sharpe ratio and R-VaR ratio portfolios consisting of five at-the-money options written on the five stocks in the dataset of Section~\ref{sec:params}.

\subsection{Analytic Optimization}

\noindent First, we find optimal Sharpe and R-VaR ratios based on our analytic results (Theorems~\ref{thm:optimal_sharpe_portfolio} and~\ref{thm:optimal_var_adj_sharpe_portfolio}) which are reported below.

\begin{figure}[h!]
    \centering
    \includegraphics[width=0.75\linewidth]{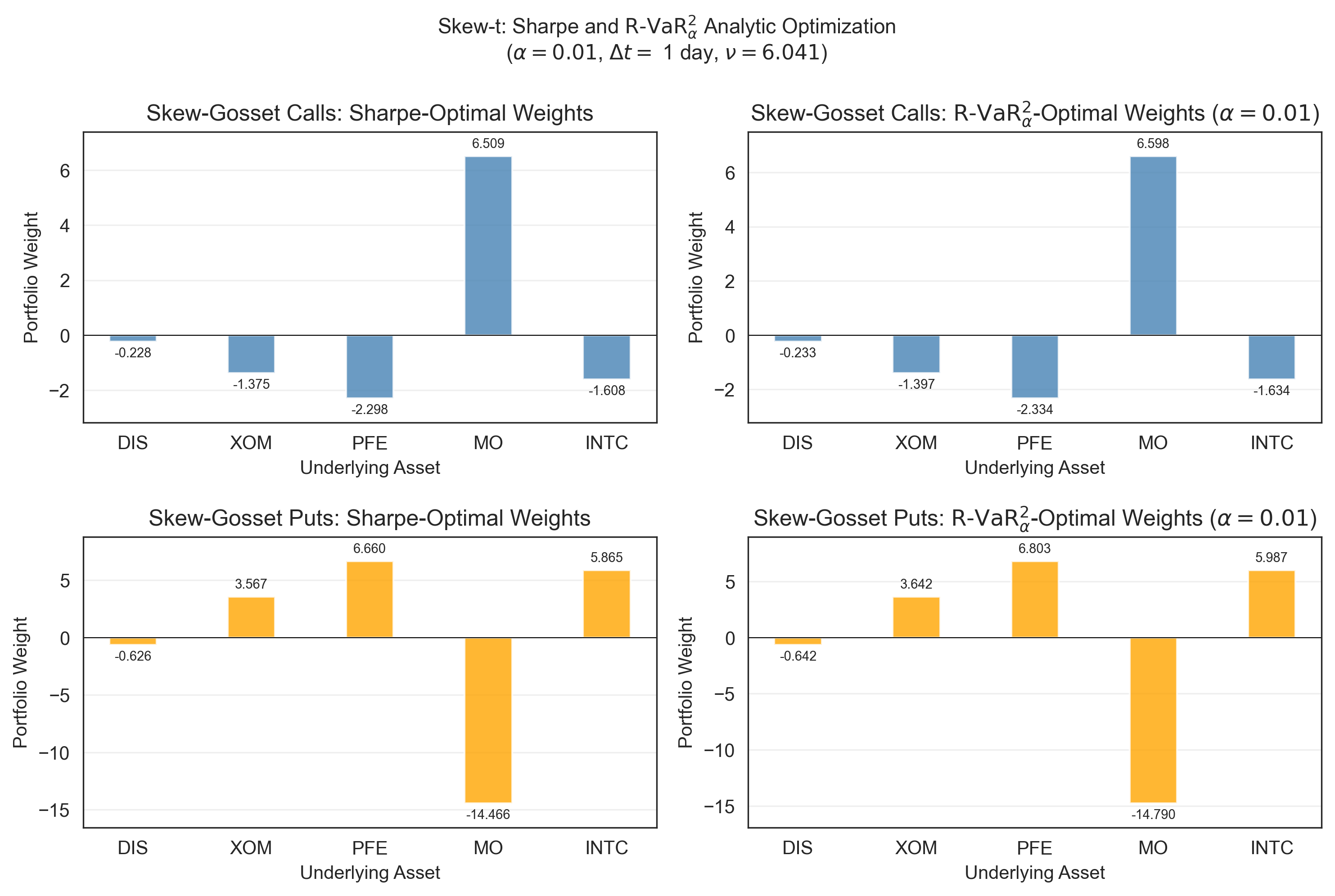}
    \caption{}
    \label{fig:analytic_portfolio}
\end{figure}

We noticed that the optimal portfolios are different between the two ratios. In particular, for the call option portfolio, there is a 9\% difference in MO stocks (see the upper-right panel in Figure~\ref{fig:analytic_portfolio}).

Optimizing the Sharpe ratio and R-VaR ratio directly appears to yield highly leveraged positions. Thus, portfolio managers may be interested to impose box constraints in order to keep exposures to various positions limited in magnitude. We do so in the following section:

\subsubsection{Effect of Box Constraints}

We now perform numerical portfolio optimization of the Sharpe ratio and R-VaR ratio with box constraints.

\begin{figure}[h!]
    \centering
    \includegraphics[width=0.75\linewidth]{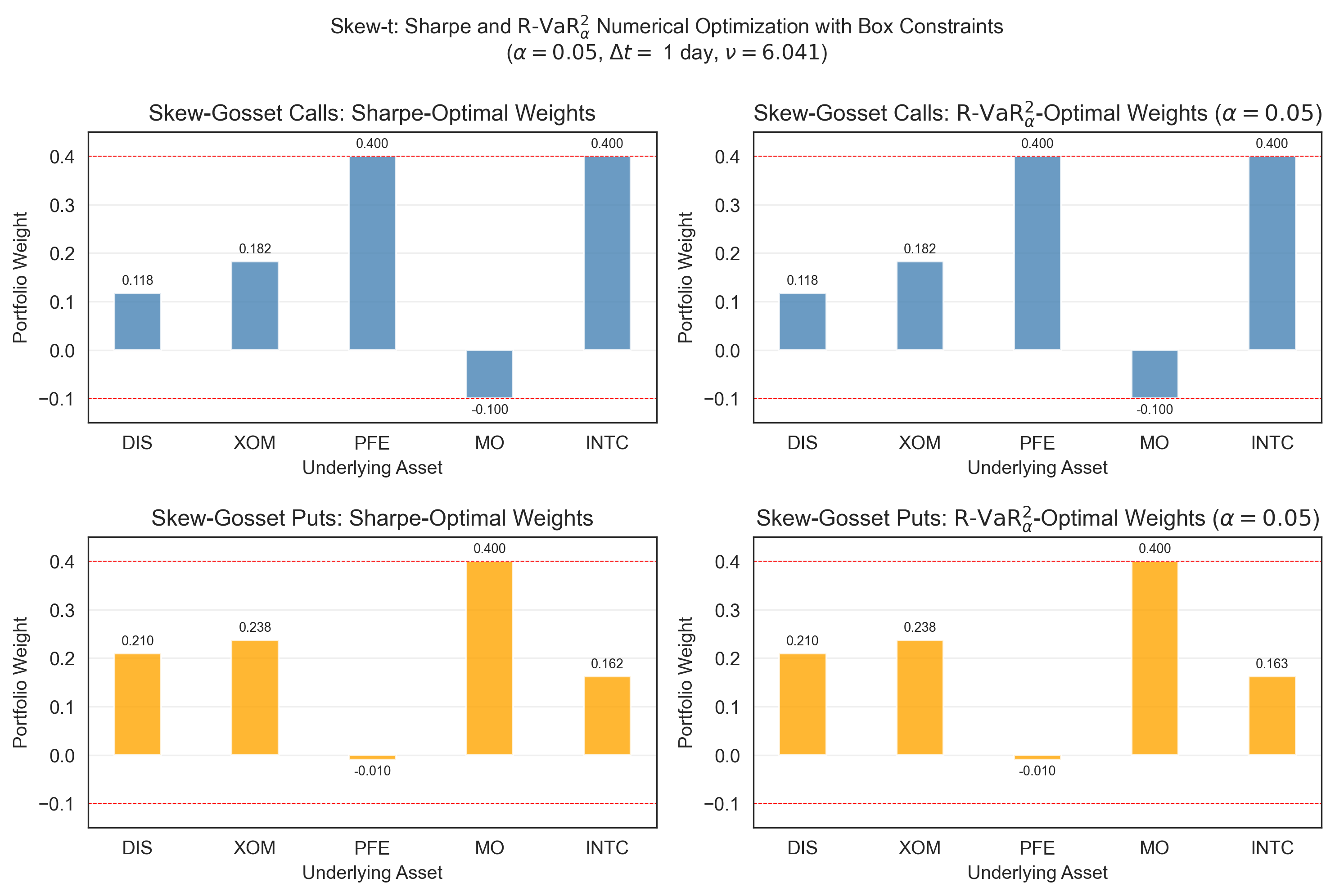}
    \caption{}
    \label{fig:numeric_box_constraints}
\end{figure}

In Figure~\ref{fig:numeric_box_constraints}, we see that imposing box constraints yields optimal positions which are very similar between Sharpe and R-VaR optimization. As such, these findings motivate a future study with  higher order approximation in the VaR expansion.

\section{Conclusion and Future Work} \label{sec:conclusion}

In this paper, we found analytical results for optimal Sharpe ratio and R-VaR ratio portfolios for options written on stocks assuming their underlying returns follow a skew $t$-distribution.

In future work, we aim to extend our results to include trading costs \citep{AlexanderS.2006MCaV}, more periods~\citep{Deng2019}, the effect of liquidity on asset prices~\citep{Shidfara2014,Yazdanian2016}, or allowing model parameters to depend on several regimes as in \citep{CoveiCanepaPirvu2019}. Another research direction is to incorporate portfolio diversification \citep{Pourbabaee2016,Maheshwari2020}.

\paragraph{Acknowledgements} Traian A. Pirvu acknowledges that this work was supported by NSERC grant RGPIN-2019-05397.

\bibliographystyle{plainnat}
\bibliography{references,ref_temp}

@misc{sung2026optimaloptionportfoliosskewelliptical,
      title={Optimal Option Portfolios for Skew-Elliptical t Returns}, 
      author={Kyle Sung and Traian A. Pirvu},
      year={2026},
      eprint={2601.07991},
      archivePrefix={arXiv},
      primaryClass={q-fin.PM},
      url={https://arxiv.org/abs/2601.07991}, 
}

@article{CUI20132124,
title = {Nonlinear portfolio selection using approximate parametric Value-at-Risk},
journal = {Journal of Banking \& Finance},
volume = {37},
number = {6},
pages = {2124-2139},
year = {2013},
issn = {0378-4266},
doi = {https://doi.org/10.1016/j.jbankfin.2013.01.036},
url = {https://www.sciencedirect.com/science/article/pii/S0378426613000617},
author = {Xueting Cui and Shushang Zhu and Xiaoling Sun and Duan Li}
}

@article{Hu01012010,
author = {Wenbo Hu and Alec N. Kercheval},
title = {Portfolio optimization for {Student} $t$ and skewed $t$ returns},
journal = {Quantitative Finance},
volume = {10},
number = {1},
pages = {91--105},
year = {2010},
publisher = {Routledge},
doi = {10.1080/14697680902814225},


URL = { 
    
        https://doi.org/10.1080/14697680902814225
    
    

},
eprint = { 
    
        https://doi.org/10.1080/14697680902814225
    
    

}

}

@book{Glasserman2003,
    author = {Paul Glasserman},
    title = {Monte Carlo Methods in Financial Engineering},
    publisher = {Springer New York, NY},
    year = {2003}
}

@article{Cassidy_2010,
   title={Pricing European options with a log Student’s t-distribution: A Gosset formula},
   volume={389},
   ISSN={0378-4371},
   url={http://dx.doi.org/10.1016/j.physa.2010.08.037},
   DOI={10.1016/j.physa.2010.08.037},
   number={24},
   journal={Physica A: Statistical Mechanics and its Applications},
   publisher={Elsevier BV},
   author={Cassidy, Daniel T. and Hamp, Michael J. and Ouyed, Rachid},
   year={2010},
   month=dec, pages={5736–5748} 
}

@article{10.1093/rfs/hhm075,
    author = {DeMiguel, Victor and Garlappi, Lorenzo and Uppal, Raman},
    title = {Optimal Versus Naive Diversification: How Inefficient is the 1/N Portfolio Strategy?},
    journal = {The Review of Financial Studies},
    volume = {22},
    number = {5},
    pages = {1915-1953},
    year = {2009},
    month = {05},
    issn = {0893-9454},
    doi = {10.1093/rfs/hhm075},
    url = {https://doi.org/10.1093/rfs/hhm075},
    eprint = {https://academic.oup.com/rfs/article-pdf/22/5/1915/24429471/hhm075.pdf},
}

@article{Wang2024SkewT,
title = {Multivariate unified skew-t distributions and their properties},
journal = {Journal of Multivariate Analysis},
volume = {203},
pages = {105322},
year = {2024},
issn = {0047-259X},
doi = {https://doi.org/10.1016/j.jmva.2024.105322},
url = {https://www.sciencedirect.com/science/article/pii/S0047259X24000290},
author = {Kesen Wang and Maicon J. Karling and Reinaldo B. Arellano-Valle and Marc G. Genton}
}

@article{Cassidy01082013,
    author = {Daniel T. Cassidy and Michael J. Hamp and Rachid Ouyed},
    title = {Log Student’s t-distribution-based option sensitivities: Greeks for the Gosset formulae},
    journal = {Quantitative Finance},
    volume = {13},
    number = {8},
    pages = {1289--1302},
    year = {2013},
    publisher = {Routledge},
    doi = {10.1080/14697688.2012.744087},
    URL = { 
            https://doi.org/10.1080/14697688.2012.744087
    },
    eprint = { 
            https://doi.org/10.1080/14697688.2012.744087
    }
}

@article{basnarkov2019optionpricingheavytaileddistributions,
      title={Option Pricing with Heavy-Tailed Distributions of Logarithmic Returns}, 
      author={Lasko Basnarkov and Viktor Stojkoski and Zoran Utkovski and Ljupco Kocarev},
      year={2019},
      journal={International Journal of Theoretical and 
Applied Finance},
        volume={22},
        number={7},
      eprint={1807.01756},
    pages={1950041},
}

@article{Maheshwari2020,
  author    = {A. Maheshwari and T. A. Pirvu},
  title     = {Portfolio Optimization under Correlation Constraint},
  journal   = {Risks},
  year      = {2020},
  volume    = {8},
  number    = {1},
  pages     = {1--19},
}

@article{CoveiCanepaPirvu2019,
  author  = {Covei, D. P. and Canepa, E. C. and Pirvu, T. A.},
  title   = {Stochastic production planning with regime switching},
  journal = {Journal of Industrial and Management Optimization},
  volume  = {19},
  number  = {3},
  pages   = {1697--1713},
  year    = {2019}
}

@article{Pourbabaee2016,
  author    = {F. Pourbabaee and M. Kwak and T. A. Pirvu},
  title     = {Risk Minimization and Portfolio Diversification},
  journal   = {Quantitative Finance},
  year      = {2016},
  volume    = {9},
  number    = {1},
  pages     = {1325--1332}
}

@article{Yazdanian2016,
  author    = {A. R. Yazdanian and T. A. Pirvu},
  title     = {Numerical Analysis for Spread Option Pricing Model of Markets with Finite Liquidity: Full Feedback Model},
  journal   = {Applied Mathematics \& Information Sciences},
  year      = {2016},
  volume    = {10},
  number    = {4},
  pages     = {1271--1281}
}

@article{Deng2019,
  author    = {L. Deng and T. A. Pirvu},
  title     = {Multi-period Investment Strategies under Cumulative Prospect Theory},
  journal   = {Journal of Risk and Financial Management},
  year      = {2019},
  volume    = {12},
  number    = {2},
  pages     = {1--15}
}

@article{Shidfara2014,
  author    = {A. Shidfara and K. Paryaba and A. R. Yazdanian and T. A. Pirvu},
  title     = {Numerical Analysis for Spread Option Pricing Model of Markets with Finite Liquidity: First-order Feedback Model},
  journal   = {International Journal of Computer Mathematics},
  year      = {2014},
  volume    = {91},
  number    = {12},
  pages     = {2603--2620}
}

@article{Jewell2013,
  author    = {S. Jewell and Y. Li and T. A. Pirvu},
  title     = {Non-Linear Equity Portfolio Variance Reduction under Delta-Gamma Approach Analysis},
  journal   = {Operations Research Letters},
  year      = {2013},
  volume    = {41},
  number    = {6},
  pages     = {694--700}
}

@article{Chen2013,
  author    = {R. Chen and L. Yu},
  title     = {A Novel Nonlinear Value-at-Risk Method for Modeling Risk of Option Portfolio with Multivariate Mixture of Normal Distributions},
  journal   = {Economic Modelling},
  year      = {2013},
  volume    = {35},
  pages     = {796--804}
}

@article{ZAKAMOULINE20091242,
title = {Portfolio performance evaluation with generalized Sharpe ratios: Beyond the mean and variance},
journal = {Journal of Banking \& Finance},
volume = {33},
number = {7},
pages = {1242-1254},
year = {2009},
issn = {0378-4266},
doi = {https://doi.org/10.1016/j.jbankfin.2009.01.005},
url = {https://www.sciencedirect.com/science/article/pii/S0378426609000132},
author = {Valeri Zakamouline and Steen Koekebakker}
}

@article{maller2010,
    author={Maller, Ross A. and Durand, Robert B. and Jafarpour, Hediah},
    year={2010},
    month={Summer},
    title={Optimal portfolio choice using the maximum Sharpe ratio},
    journal={The Journal of Risk},
    volume={12},
    number={4},
    pages={49-73},
    note={Copyright - Copyright Incisive Media Plc Summer 2010; Document feature - Equations; Graphs; ; Last updated - 2024-11-19; SubjectsTermNotLitGenreText - United States--US},
    isbn={14651211},
    language={English},
    url={http://libaccess.mcmaster.ca/login?url=https://www.proquest.com/scholarly-journals/optimal-portfolio-choice-using-maximum-sharpe/docview/817784573/se-2},
}

@article{sharpe1998sharpe,
  title={The {Sharpe} ratio},
  author={Sharpe, William F},
  journal={Streetwise--the Best of the Journal of Portfolio Management},
  volume={3},
  number={3},
  pages={169--85},
  year={1998},
  publisher={Princeton University Press NJ}
}

@article{Cogneau2009_101Ways,
author = {Cogneau, Philippe and Hübner, Georges},
year = {2009},
month = {01},
pages = {},
title = {The 101 Ways to Measure Portfolio Performance},
journal = {SSRN Electronic Journal},
doi = {10.2139/ssrn.1326076}
}

@article{sortino1991,
author={Sortino,Frank A. and van der Meer,Robert},
year={1991},
month={Summer},
title={Downside Risk},
journal={Journal of Portfolio Management},
volume={17},
number={4},
pages={27},
note={Copyright - Copyright Euromoney Institutional Investor PLC Summer 1991; Last updated - 2024-12-01},
isbn={00954918},
language={English},
url={http://libaccess.mcmaster.ca/login?url=https://www.proquest.com/scholarly-journals/downside-risk/docview/195576478/se-2},
}

@article{MartinR.Douglas2003Popa,
author = {Martin, R. Douglas and Rachev, Svetlozar (Zari) and Siboulet, Frederic},
issn = {1540-6962},
journal = {Wilmott (London, England)},
language = {eng},
number = {6},
pages = {70-83},
title = {Phi-alpha optimal portfolios and extreme risk management},
volume = {2003},
year = {2003},
}

@article{Cotter01102006,
author = {John Cotter},
title = {Extreme Value Estimation of Boom and Crash Statistics},
journal = {The European Journal of Finance},
volume = {12},
number = {6-7},
pages = {553--566},
year = {2006},
publisher = {Routledge},
doi = {10.1080/13518470500460111},
URL = { https://doi.org/10.1080/13518470500460111 },
eprint = {
https://doi.org/10.1080/13518470500460111
}
}

@article{Echaust20142234,
author = {Echaust, Krzysztof},
doi = {10.2478/foli-2014-0102},
url = {https://doi.org/10.2478/foli-2014-0102},
title = {A Comparison of Tail Behaviour of Stock Market Returns},
journal = {Folia Oeconomica Stetinensia},
number = {1},
volume = {14},
year = {2014},
pages = {22--34}
}

@article{KIM2003417,
    title = {Moments of random vectors with skew $t$ distribution and their quadratic forms},
    journal = {Statistics \& Probability Letters},
    volume = {63},
    number = {4},
    pages = {417-423},
    year = {2003},
    issn = {0167-7152},
    doi = {https://doi.org/10.1016/S0167-7152(03)00121-4},
    url = {https://www.sciencedirect.com/science/article/pii/S0167715203001214},
    author = {Hyoung-Moon Kim and Bani K. Mallick}
}

@article{AzzaliniCapitanio2003Skewt,
    author = {Azzalini, Adelchi and Capitanio, Antonella},
    title = {Distributions Generated by Perturbation of Symmetry with Emphasis on a Multivariate Skew $t$-Distribution},
    journal = {Journal of the Royal Statistical Society Series B: Statistical Methodology},
    volume = {65},
    number = {2},
    pages = {367-389},
    year = {2003},
    month = {04},
    issn = {1369-7412},
    doi = {10.1111/1467-9868.00391},
    url = {https://doi.org/10.1111/1467-9868.00391},
    eprint = {https://academic.oup.com/jrsssb/article-pdf/65/2/367/49684466/jrsssb_65_2_367.pdf},
}

@Article{YinBalakrishnan2024SkewElliptical,
journal={Journal of Multivariate Analysis},
author={Yin, Chuancun and Balakrishnan, Narayanaswamy},
title={Stochastic representations and probabilistic characteristics of multivariate skew-elliptical distributions},
year={2024},
volume={199},
number={C},
doi={10.1016/j.jmva.2023.105240},
url={https://ideas.repec.org/a/eee/jmvana/v199y2024ics0047259x23000866.html},
}

@article{AzzaliniCapitanio1999SkewNormal,
 ISSN = {13697412, 14679868},
 URL = {http://www.jstor.org/stable/2680724},
 author = {A. Azzalini and A. Capitanio},
 journal = {Journal of the Royal Statistical Society. Series B (Statistical Methodology)},
 number = {3},
 pages = {579--602},
 publisher = {[Royal Statistical Society, Oxford University Press]},
 title = {Statistical Applications of the Multivariate Skew Normal Distribution},
 urldate = {2026-03-13},
 volume = {61},
 year = {1999}
}

@article{genton2005generalized,
  title   = {Generalized skew-elliptical distributions and their quadratic forms},
  author  = {Genton, Marc G. and Loperfido, Nicola M. R.},
  journal = {Annals of the Institute of Statistical Mathematics},
  volume  = {57},
  number  = {2},
  pages   = {389--401},
  year    = {2005},
  doi     = {10.1007/BF02507031}
}

@Article{risks5020027,
AUTHOR = {Metel, Michael R. and A. Pirvu, Traian and Wong, Julian},
TITLE = {Risk Management under Omega Measure},
JOURNAL = {Risks},
VOLUME = {5},
YEAR = {2017},
NUMBER = {2},
ARTICLE-NUMBER = {27},
URL = {https://www.mdpi.com/2227-9091/5/2/27},
ISSN = {2227-9091},
DOI = {10.3390/risks5020027}
}

@article{Lin2007RobustSkewT,
  author  = {Lin, T. I. and Lee, J. C. and Hsieh, W. J.},
  title   = {Robust mixture modeling using the skew $t$ distribution},
  journal = {Statistics and Computing},
  year    = {2007},
  volume  = {17},
  number  = {1},
  pages   = {81--92},
  doi     = {10.1007/s11222-006-9005-8},
  publisher = {Springer}
}

@article{GENTON2001319,
title = {Moments of skew-normal random vectors and their quadratic forms},
journal = {Statistics \& Probability Letters},
volume = {51},
number = {4},
pages = {319-325},
year = {2001},
issn = {0167-7152},
doi = {https://doi.org/10.1016/S0167-7152(00)00164-4},
url = {https://www.sciencedirect.com/science/article/pii/S0167715200001644},
author = {Marc G. Genton and Li He and Xiangwei Liu}
}

@article{BRANCO200199,
title = {A General Class of Multivariate Skew-Elliptical Distributions},
journal = {Journal of Multivariate Analysis},
volume = {79},
number = {1},
pages = {99-113},
year = {2001},
issn = {0047-259X},
doi = {https://doi.org/10.1006/jmva.2000.1960},
url = {https://www.sciencedirect.com/science/article/pii/S0047259X00919602},
author = {Márcia D. Branco and Dipak K. Dey}
}

@article{AlexanderS.2006MCaV,
pages = {583-605},
publisher = {Elsevier B.V},
author = {Alexander, S. and Coleman, T.F. and Li, Y.},
address = {Amsterdam},
copyright = {2005},
issn = {0378-4266},
journal = {Journal of Banking \& Finance},
language = {eng},
number = {2},
volume = {30},
year = {2006},
title = {Minimizing CVaR and VaR for a portfolio of derivatives},
}

@article{AlexanderGordonJ.2003PPEU,
author = {Alexander, Gordon J. and Baptista, Alexandre M.},
address = {London},
copyright = {Copyright Euromoney Institutional Investor PLC Summer 2003},
issn = {0095-4918},
journal = {Journal of portfolio management},
language = {eng},
number = {4},
pages = {93-102},
publisher = {Pageant Media},
title = {Portfolio Performance Evaluation Using Value at Risk},
volume = {29},
year = {2003},
}

@article{GregoriouGregN2004PoCh,
author = {Gregoriou, Greg N},
address = {London},
copyright = {Copyright Henry Stewart Conferences and Publications Ltd. 2004},
issn = {1753-9641},
journal = {Journal of derivatives \& hedge funds},
language = {eng},
number = {2},
pages = {149-},
publisher = {Palgrave Macmillan},
title = {Performance of {C}anadian hedge funds using a modified {S}harpe ratio},
volume = {10},
year = {2004},
}

\appendix 

\section{Moment Explicit Forms} \label{sec:moments}
\noindent In Equation~\eqref{eqn:expectation_variance}, $\vec{u}$ and $Q$ can be written as:
\begin{align*}
    \mbf u &\coloneqq \bs{\zeta} + cB \mbf h + c D \mbf h,
    \\
    Q &\coloneqq \frac{1}{2} (\tilde Q + \tilde Q^\top),
\end{align*}
where
    \begin{align*}
        c &\coloneqq \sqrt{\dfrac{\nu}{\pi}} \dfrac{\operatorname{Gamma}(\tfrac{\nu - 1}{2})}{\operatorname{Gamma}(\tfrac{\nu}{2})}
        \\
        \vec{h} &\coloneqq \dfrac{\bs{\Sigma} \bs{\omega}}{\sqrt{1 + \bs{\omega}^\top \bs{\Sigma} \bs{\omega}}}
    \end{align*}
are skew distributional parameters,
and the intermediate variables are:
\begin{align*}
    \tilde{Q} &\coloneqq U + \dfrac{4c\nu}{\nu - 3}(H + E) + \dfrac{2c\nu}{(\nu - 2)(\nu - 3)} (B + D^\top)\vec{h}\vec{p}^\top 
    \\&\quad\; 
    - \dfrac{2c\nu}{\nu - 3}(B + D^\top)\vec{h}\mathbf{q}^\top - 2c^2 (B + D^\top)\vec{h}\vec{h}^\top(B + D^\top)^\top
    \\
    \bs{\zeta} &\coloneqq 
        \left[ (\Delta t) \bs{\theta} + D^\top \bs{\mu} + \dfrac{\nu}{2(\nu - 2)} \vec{p} + \bs{\xi}  \right]
    \\
    U &\coloneqq 
        \left[\dfrac{2\nu}{\nu -2} ((D^\top + B)\bs{\Sigma} (D^\top + B)^\top) + \dfrac{\nu^2}{(\nu - 2) (\nu -4)} R + \dfrac{\nu^2}{(\nu-2)^2(\nu-4)}\vec{p}\vec{p}^\top\right]
\end{align*}
\[
    \begin{alignedat}{2}
        H
        &\coloneqq [h_{i,j}]_{i,j=1}^N, &\quad
         h_{i,j} &\coloneqq \bs{\mu}^\top \Gamma^{[i]} \bs{\Sigma} \Gamma^{[j]} \vec{h},
        \\
        E &\coloneqq [e_{i,j}]_{i,j=1}^N, &\quad e_{i,j} &\coloneqq (D_{[\bullet, i]})^\top \bs{\Sigma} \Gamma^{[j]} \vec{h},
        \\
        \mathbf{q} &\coloneqq [q_n]_{n=1}^N, \quad &q_n &\coloneqq \vec{h}^\top \Gamma^{[n]} \vec{h}
        ,
    \end{alignedat}
\]
\[
    \begin{aligned}
        \vec{p} &\coloneqq (p_1,\ldots,p_M)  &  \quad p_m &\coloneqq \tr[\Gamma^{[m]} \bs{\Sigma} ]
        \\
        D &\coloneqq (\delta_n^m)  &  \quad \delta_n^m &\coloneqq \pdv{V_m}{S_n}
        \\
        R &\coloneqq [r_{i,j}] & \quad r_{i,j} &\coloneqq \tr[\Gamma^{[i]}\bs{\Sigma} \Gamma^{[j]} \bs{\Sigma}]
        \\
        \bs{\xi} &\coloneqq (\xi_1,\ldots,\xi_M)  & \quad \xi_m &\coloneqq \dfrac{1}{2} \sum\limits_{i=1}^N \sum\limits_{j=1}^N \mu_i \mu_j {\gamma}^{[i,j]}_m 
        \\
                B &\coloneqq \begin{bmatrix}
                \text{---} \hspace{-0.025cm} & \bs{\mu}^\top \Gamma^{[1]} & \hspace{-0.025cm} \text{---}
                \\
                \text{---} \hspace{-0.025cm} & \vdots& \hspace{-0.025cm} \text{---}
                \\
                \text{---} \hspace{-0.025cm} & \bs{\mu}^\top \Gamma^{[M]} & \hspace{-0.025cm} \text{---}
            \end{bmatrix}
        .
    \end{aligned}
\]

\section{Proofs} \label{sec:proofs}

\subsection{Proof of Theorem~\ref{thm:optimal_sharpe_portfolio}}

\begin{proof}
    We proceed with a modification to Lagrange multipliers by setting $\sqrt{\frac{1}{2} \vec{x}^\top Q\vec{x}} = \varepsilon$, maximizing the linear term, then maximizing over all possible $\varepsilon$. Thus, we have the modified problem: 
    \begin{equation*}
        \label{eqn:proof:thm:optimal_sharpe_portfolios:approach2:modified_problem}
        \begin{cases}
        \max \limits_{\vec{x} \in \R^M} \dfrac{\vec{u}^\top \vec{x} - r_f}{\varepsilon} \equiv \max \limits_{\vec{x} \in \R^M} \vec{u}^\top \vec{x}
            \\[10pt]
        \vec{x}^\top \vec{v} = 1
        \\[10pt]
        \frac{1}{2} \vec{x}^\top Q\vec{x} = \varepsilon^2
        .
        \end{cases}
    \end{equation*}
    After solving the problem in Equation~\eqref{eqn:proof:thm:optimal_sharpe_portfolios:approach2:modified_problem} and achieving $\vec{x}^\star(\varepsilon)$, we can find the optimal $\varepsilon$ by taking the maximum over $\varepsilon \in [0,\infty)$ of $\dfrac{\vec{u}^\top \vec{x}^\star(\varepsilon) - r_f}{\varepsilon}$.

    \noindent We now construct the Lagrangian:
    \begin{equation*}
        \label{eqn:proof:thm:optimal_sharpe_portfolios:approach2:lagrangian}
        \Lc ( \vec{x}; \lambda_1, \lambda_2)
        =
        \vec{u}^\top \vec{x} + \lambda_1 (\vec{x}^\top \vec{v} - 1) + \lambda_2 ((\tfrac{1}{2}\vec{x}^\top Q\vec{x})-\varepsilon^2).
    \end{equation*}
    By the K.K.T.\ stationarity condition, we have
    \[
        \nabla_\vec{x} \Lc ( \vec{x}; \lambda_1, \lambda_2 )
        =
        \vec{u} + \lambda_1 \vec{v} + \lambda_2 Q\vec{x}
        =
        \Zero
        .
    \]
    Isolating, we achieve
    \begin{equation}
        \label{eqn:proof:thm:optimal_sharpe_portfolios:approach2:x_lagrangian}
        \vec{x} 
        = 
            - \frac{1}{\lambda_2} Q^{-1}(\vec{u} + \lambda_1 \vec{v})
        .
    \end{equation}
    Left multiplying by $\vec{v}$ and applying the constraint $\vec{x}^\top \vec{v} = 1$, we have \[
        \vec{v}^\top \vec{x} = -\dfrac{1}{\lambda_2} \left( \vec{v}^\top Q^{-1}\vec{u} +\lambda_1 \vec{v}^\top Q^{-1} \vec{v}\right) = 1,
    \]
    and so defining the intermediate variables
    \begin{equation*}
        \label{eqn:proof:thm:optimal_sharpe_portfolios:subspace_maximzation:INTERMEDIATE_VARS}
            \rho_1 \coloneqq \vec{u}^\top Q^{-1} \vec{u}
            ,\qquad
            \rho_2 \coloneqq \vec{u}^\top Q^{-1} \vec{v}
            ,\qquad
            \rho_3 \coloneqq \vec{v}^\top Q^{-1} \vec{v},
    \end{equation*}
    we have
    \begin{equation}
        \label{eqn:proof:thm:optimal_sharpe_portfolios:approach2:lagrange_multipliers:2}
        \lambda_2 
        = 
        - (\rho_2 + \lambda_1 \rho_3)
        .
    \end{equation}
    Thus, substituting the first Lagrange multiplier in $\vec{x}$ yields:
    \begin{equation}
        \label{eqn:proof:thm:optimal_sharpe_portfolios:approach2:x_lagrangian:lambda2_known}
        \vec{x} = \dfrac{1}{\rho_2 + \lambda_1 \rho_3} Q^{-1} (\vec{u} + \lambda_1 \vec{v}).
    \end{equation}
    Similarly, substituting Equation~\eqref{eqn:proof:thm:optimal_sharpe_portfolios:approach2:x_lagrangian:lambda2_known} into the constraint $\frac{1}{2}\vec{x}^\top Q\vec{x} = \varepsilon^2$ yields
    \[
        \frac{1}{2} \frac{1}{(\rho_2 + \lambda_1\rho_3)^2} (\vec{u} + \lambda_1\vec{v})^\top Q^{-1} (\vec{u} + \lambda_1 \vec{v}) = \varepsilon^2
        ,
    \]
    and after some manipulation, we produce
    \[
        \lambda_1^2 (\rho_3 - 2\varepsilon^2\rho_3^2) + \lambda_1 (2\rho_2 - 4\varepsilon^2 \rho_2\rho_3) + (\rho_1 - 2\varepsilon^2 \rho_2^2) = 0.
    \]
    Solving for $\lambda_1$ produces:
    \begin{align*}
        \lambda_1 
        &= 
            \dfrac{-\rho_2(1 - 2\varepsilon^2 \rho_3) \pm \sqrt{(1 - 2\varepsilon^2 \rho_3) (\rho_2^2 - \rho_1\rho_3)}}{\rho_3 (1 - 2\varepsilon^2 \rho_3)}
        .
    \end{align*}
    Returning to $\lambda_2$, we substitute $\lambda_1$ in Equation~\eqref{eqn:proof:thm:optimal_sharpe_portfolios:approach2:lagrange_multipliers:2} to produce
    {\allowdisplaybreaks \begin{align*}
        \lambda_2
        &=
            \mp \sqrt{\dfrac{\rho_2^2 - \rho_1\rho_3}{1 - 2\varepsilon^2 \rho_3}} 
        .
    \end{align*}}
    Substituting $\lambda_1$ and $\lambda_2$ in Equation~\eqref{eqn:proof:thm:optimal_sharpe_portfolios:approach2:x_lagrangian} produces the expression 
    \begin{align}
        \vec{x} (\varepsilon,\pm)
        &= 
        \label{eqn:proof:thm:optimal_sharpe_portfolios:approach2:x_in_lagrange_multipliers}
            \pm Q^{-1}\left( \sqrt{\frac{1-2\varepsilon^2\rho_3}{\rho_2^2-\rho_1\rho_3}}\;\vec{u} + \frac{-\rho_2\sqrt{\frac{1-2\varepsilon^2\rho_3}{\rho_2^2-\rho_1\rho_3}}\pm 1}{\rho_3}\,\vec{v} \right)
        .
    \end{align}
    This now produces the modified objective
    \[
        \begin{cases}
            \max \limits_{\varepsilon > 0} \dfrac{\vec{u}^\top \vec{x}(\varepsilon, \pm) - r_f}{\varepsilon}
            \\[10pt]
            \vec{x}^\top \vec{v} = 1
            \\[5pt]
            \tfrac{1}{2} \vec{x}^\top Q\vec{x} = \varepsilon^2.
        \end{cases}
    \]
    We may substitute Equation~\eqref{eqn:proof:thm:optimal_sharpe_portfolios:approach2:x_in_lagrange_multipliers} in the objective of the modified problem in Equation~\eqref{eqn:proof:thm:optimal_sharpe_portfolios:approach2:modified_problem} to produce:
    \begin{align*}
        f (\varepsilon) 
        &\coloneqq 
            \frac{1}{\varepsilon} \left(
                \frac{\rho_2}{\rho_3} - r_f \mp \frac{1}{\rho_3} \sqrt{(\rho_2^2 - \rho_1\rho_3)(1 - 2\varepsilon^2 \rho_3)}
            \right)
        .
    \end{align*}
    We perform a univariable maximization over $f(\varepsilon)$ for each positive $\varepsilon$. First, we compute the derivative $f'(\varepsilon)$ as:
    \begin{align*}
        f'(\varepsilon)
        &=    -\frac{1}{\varepsilon^{2}}
    \left(
        \frac{\rho_{2}}{\rho_{3}}
        - r_{f}
        \mp
        \frac{1}{\rho_{3}}
        \sqrt{\left(\rho_{2}^{2}-\rho_{1}\rho_{3}\right)\left(1-2\varepsilon^{2}\rho_{3}\right)}
        \right)
        \pm
        \frac{2\left(\rho_{2}^{2}-\rho_{1}\rho_{3}\right)}
        {\sqrt{\left(\rho_{2}^{2}-\rho_{1}\rho_{3}\right)\left(1-2\varepsilon^{2}\rho_{3}\right)}}.
    \end{align*}
    We find the stationary points $f'(\varepsilon) = 0$, 
    and note that only the positive root 
    is admissible since we assume $\varepsilon > 0$. So, we have:
    \[
        \varepsilon_\star
        =
        \sqrt{
            \dfrac{\rho_1 - 2r_f \rho_2 + r_f^2 \rho_3}{2(\rho_2 - r_f \rho_3)^2}
        }
        .
    \]
    Substituting these values in $\vec{x}(\varepsilon)$ in Equation~\eqref{eqn:proof:thm:optimal_sharpe_portfolios:approach2:x_in_lagrange_multipliers}, we obtain:
    {\allowdisplaybreaks
        \begin{align*}
        \vec{x}(\varepsilon^\star, \pm) 
        &= 
            \pm Q^{-1}\left(
                \frac{1}{\lvert \rho_2-r_f\rho_3\rvert}\vec{u}
                +
                \frac{
                    -\dfrac{\rho_2}{\lvert \rho_2-r_f\rho_3\rvert}\pm 1
                }{\rho_3} \vec{v}
            \right)
            \!.
    \end{align*}
    }
    Now, for the other $\pm$, we choose the branch that maximizes $f(\varepsilon)$, that is, the sign such that \[
        \pm |\rho_2 - r_f \rho_3| = \rho_2 - r_f \rho_3.
    \]
    Some simplifications and returning our intermediate variables to the original variables yields the result in equation~\eqref{eqn:optimal_sharpe_ratio_portfolio}.
\end{proof}

\subsection{Proof of Theorem~\ref{thm:optimal_var_adj_sharpe_portfolio}} \label{proof:thm:optimal_var_adj_sharpe_portfolio}

\begin{proof}
    We proceed with a modification to Lagrange multipliers: we will solve a constrained quadratic problem and then reduce to a univariable optimization problem. We do this by setting the linear term to
    a constant $\varepsilon$. We then will maximize the remaining fractional term, then perform a univariable maximization over all admissible $\varepsilon$.

    Setting $\varepsilon \coloneqq \vec{u}^\top \vec{x}$, for fixed $\varepsilon$:
    \begin{equation}
        \notag
        \begin{cases}
            \max \limits_{\vec{x} \in \R^M} \dfrac{\varepsilon - r_f}{\varepsilon - \normal^{-1}(\alpha)\sqrt{\frac{1}{2}\vec{x}^\top U\vec{x}}}
            \\[10pt]
            \vec{x}^\top \vec{v}(\vec{S},t) = 1
            \\
            \vec{u}^\top \vec{x} = \varepsilon
            .
        \end{cases}
    \end{equation}
    By assumption that $\alpha < 1/2$, the expression $\normal^{-1}(\alpha)$ is negative, thus the denominator is strictly positive. Thus, this maximization is solved when the radical term is minimized. Furthermore, the monotonicty of the square root implies that the optimization problem is simply the minimum of the quadratic form. Thus, we achieve the following optimization problem:
    \[
        \begin{cases}
            \min \limits_{\vec{x} \in \R^M} \left\{ 
            \dfrac{1}{2} \vec{x}^\top U \vec{x}
            \right\}
            \\[12pt]
            J \vec{x} = \bs{\psi}
            ,
        \end{cases}
    \]
    where $J$ and $\bs{\psi}$ are given by
    \begin{equation}
        \label{eqn:J_psistar}
        J \coloneqq \begin{bmatrix}
            \text{---}\!\!&\!\! \mbf u^\top  \!\!&\!\! \text{---}
            \\
            \text{---}\!\!&\!\!  \vec{v}(\vec{S},t)^\top  \!\!&\!\! \text{---}
        \end{bmatrix},
        \quad
        \bs{\psi} \coloneqq \begin{bmatrix}
            \varepsilon
            \\
            1
        \end{bmatrix}.
    \end{equation}
    By Theorem 3.1 \citep{sung2026optimaloptionportfoliosskewelliptical}, the optimal solution $\vec{x}(\varepsilon)$ satisfies
    \begin{align*}
        \vec{x}(\varepsilon) &= G\bs{\psi}(\varepsilon)
        \\
        \frac{1}{2}\vec{x}^\top U\vec{x} &= \Ascr \varepsilon^2 
            + 
            \Bscr \varepsilon 
            + 
            \Cscr
            ,
    \end{align*}
    where $G \coloneqq U^{-1}J^\top (JU^{-1}J^\top)^{-1}$ and 
    \begin{align*}
        \Ascr \coloneqq \frac{1}{2} G_{[\bullet,1]}^\top U G_{[\bullet, 1]}^{}
        ,\qquad
        \Bscr \coloneqq G_{[\bullet,2]}^\top U G_{[\bullet, 1]}
        ,\qquad
        \Cscr \coloneqq \frac{1}{2} G_{[\bullet,2]}^\top U G_{[\bullet, 2]}^{}
        .
    \end{align*}
    Returning to our original objective yields
    \begin{equation}
        \notag
        \begin{cases}
            \max \limits_{\varepsilon \geq 0} \dfrac{\varepsilon - r_f}{\varepsilon - \normal^{-1}(\alpha) \sqrt{\Ascr \varepsilon^2 
            + 
            \Bscr \varepsilon 
            + 
            \Cscr}}
            \\[10pt]
            \vec{x}^\top \vec{v}(\vec{S},t) = 1
            \\
            \vec{u}^\top \vec{x} = \varepsilon
            .
        \end{cases}
    \end{equation}    
    Thus, we must find the maximizer of the objective function \[
    f(\varepsilon) \coloneqq
        \dfrac{\varepsilon - r_f}{\varepsilon - \normal^{-1}(\alpha) \sqrt{\Ascr \varepsilon^2 
            + 
            \Bscr \varepsilon 
            + 
            \Cscr}}
            .
    \]
    We do this by first order optimization. Set the following intermediate variables:
    \begin{equation}
    \label{eqn:proof:thm:optimal_RVaR:A_B_C_reformulated}
    \begin{split}
        \Acal_{\AdjSharpe} &= \left[
            4r_f^2 \Ascr - \normal^{-1}(\alpha)^2 [\Bscr + 2 r_f \Ascr]^2
        \right]
        \\
        \Bcal_{\AdjSharpe} &= \left[
            4r_f^2 \Bscr - 2 \normal^{-1}(\alpha)^2[\Bscr  + 2 r_f \Ascr ] [2\Cscr  + r_f  \Bscr ]
        \right]
        \\
        \Ccal_{\AdjSharpe} &= \left[
            4 r_f^2 \Cscr - \normal^{-1}(\alpha)^2[2\Cscr  + r_f  \Bscr ]^2
        \right]
    \!
    .
    \end{split}
    \end{equation}
    Taking derivatives, one can show that the critical points of $f$ are: 
    \begin{equation}
        \label{eqn:proof:thm:optimal_RVaR:eps_PM}
        \varepsilon_\pm = \dfrac{-\Bcal_{\AdjSharpe} \pm \sqrt{\Bcal_{\AdjSharpe}^2 - 4\Acal_{\AdjSharpe}\Ccal_{\AdjSharpe}}}{2\Acal_{\AdjSharpe}}
        .
    \end{equation}
    Thus, the optimal solution is then
    \[
        \vec{x}_\star
        =
        G\bs{\psi}(\varepsilon_\pm).
    \]
    One can then show, through algebraic manipulation, that this result can be written in the form given in equation~\eqref{eqn:optimal_rvar_ratio_portfolio}.
\end{proof}

\end{document}